\def\BibTeX{{\rm B\kern-.05em{\sc i\kern-.025em b}\kern-.08em
    T\kern-.1667em\lower.7ex\hbox{E}\kern-.125emX}}
  \newenvironment{onlyarxiv}{}{}
  \newenvironment{onlyisncc}{}{}
\begin{document}

\newcommand{\M}{\mathcal{M}}
\newcommand{\X}{\mathcal{X}}
\newcommand{\R}{\mathbb{R}}
\newcommand{\deff}{:=}
\newcommand{\vertt}[1]{\vert #1 \vert}
\newcommand{\Separatable}[2]{(#1, #2)\text{-separable}}
\newcommand{\Separator}[1]{#1\text{-separator}}

\newtheorem{de}{Definition}
\newtheorem{lem}{Lemma}
\newtheorem{pro}{Proposition}
\newtheorem{example}{Example}
\newtheorem{theo}{Theorem}
\newenvironment{proof}
	{\begin{trivlist}\item[\hskip\labelsep{\em Proof.}]}
	{\leavevmode\unskip\nobreak\quad\hspace*{\fill}{\ensuremath{{\square}}}\end{trivlist}}

\algtext*{EndIf}
\algtext*{EndFor}

\algrenewcommand\algorithmicrequire{\textbf{Input:}}
\algrenewcommand\algorithmicensure{\textbf{Output:}}

\title{Differentially Private All-Pairs Shortest Distances for Low Tree-Width Graphs
\ifarxivversion
\else
\thanks{979-8-3503-3559-0/23/\$31.00 ©2023 IEEE}
\fi
}

\author{
\IEEEauthorblockN{1\textsuperscript{st} Javad B. Ebrahimi}
\IEEEauthorblockA{\textit{Sharif University of Technology} \\
\textit{Institute for Research in Fundamental Sciences (IPM)}\\
Tehran, Iran \\
javad.ebrahimi@sharif.edu}
\and
\IEEEauthorblockN{2\textsuperscript{nd} Alireza Tofighi Mohammadi}
\IEEEauthorblockA{\textit{Sharif University of Technology} \\
Tehran, Iran \\
a.tofighi77@sharif.edu}
\and
\IEEEauthorblockN{3\textsuperscript{rd} Fatemeh Kermani}
\IEEEauthorblockA{\textit{Sharif University of Technology} \\
Tehran, Iran \\
f.kermani@sharif.edu}
}

\maketitle
\ifarxivversion
    \thispagestyle{plain}
    \pagestyle{plain}
\fi

\begin{abstract}
In this paper, we present a polynomial time algorithm for the problem of differentially private all pair shortest distances over the class of low tree-width graphs. Our result generalizes the result of Sealfon~\cite{Shortest-Paths-and-Distances-with-Differential-Privacy-Sealfon} for the case of trees to a much larger family of graphs. Furthermore, if we restrict to the class of low tree-width graphs, the additive error of our algorithm is significantly smaller than that of the best known algorithm for this problem, proposed by Chen et. al. in \cite{chen2023differentially}. 
\end{abstract}

\begin{IEEEkeywords}
differential privacy, algorithms, shortest path, graph theory, tree-width.
\end{IEEEkeywords}

\section{Introduction}

\subsection{Differential Privacy}
Privacy-preserving data analysis 
is a way of learning about population 
while keeping confidential information about individuals private.
\textit{Differential privacy} introduced in the work of Dwork et. al. \cite{Calibrating-Noise-to-Sensitivity-in-Private-Data-Analysis-Dwork-etal} is a definition and clarification of this concept.

There are two main ingredients to make an algorithm privacy preserving. The first one is to answer queries in a randomized way.
That is, to output a
random element from the set of all possible outcomes. Equivalently, the output can be modeled as a probability distribution over the set of possible outcomes.
The second point is to make sure that adding or removing any individual from the dataset does not significantly change the output distribution. 
Datasets which are only different in one individual, are called neighboring datasets.
To generalize \textit{neighboring} definition in a more abstract context, we can consider the case where the domain of the algorithm is a metric space $(\X, d)$. In this case, we say
$x, y \in \X$ are neighbors if $d(x, y) \leq 1$.
As an example, 
when $\X$ is the set of integral vectors and $d$ is the $\ell_1$-metric, neighboring elements are of the form
$x, y \in \X$ 
such that $x, y$ differ in exactly one co-ordinate and in that co-ordinate, they are equal to two consecutive numbers.

One may observe that if a mechanism outputs the same distribution regardless of the input, it will be completely private. However, the main challenge in differential privacy is to design differential private mechanism with high level of privacy, while, approximating a target function reasonably accurate.
More precisely, for a target function $f$ on a metric space $(\X, d)$,  $\M$ is a DP mechanism if for any close point $x_1,x_2 \in \X$, 
$\M(x_1),\M(x_2)$ are \textit{close} distributions and 
$\M(x)$ is a good approximation of $f(x)$. In section \ref{preliminaries}
we make this definition mathematically precise. 

\subsection{Sensitivity}




The $\ell_1$ sensitivity of a function $f$
is the maximum value that the change of a single individual's data in input can change the function $f$'s output. Therefore, intuitively, sensitivity measures the minimum amount of uncertainty we must have in the response
in order to guarantee keeping an individual's data private.

The output of functions with higher sensitivity must be perturbed more to preserve a certain extent of privacy, and for functions with lower sensitivity, we can add lower noise to guarantee differential privacy with the same amount of privacy leak.

By choosing the noise from a family of the distributions, called Laplace distributions, we can guarantee that the mechanism is private and the error is small. For more details about Laplace noise and the proof of this statement, see
\cite{The-algorithmic-foundation-of-dp}
and for the formal definitions of the 
$\varepsilon$-differential privacy, error and Laplace mechanism see 
section
\ref{preliminaries}.



\subsection{Differential privacy on graphs}
One of the canonical problems in graph theory and computer science is to compute the shortest path between a pair of vertices of a given graph.
This problem is well-explored and efficient algorithms have been proposed in the literature. However, we are interested in the differential private version of this problem. 
This problem was first proposed by
Sealfon in 
\cite{Shortest-Paths-and-Distances-with-Differential-Privacy-Sealfon}.
Now, we explain the model of the problem.

Let 
$G = (V, E)$ be a graph on the vertex set $V$ and the edge set $E$ and let $w$ be the edge weight function that assigns non-negative weights to the edges of the underlying graph 
$G = (V, E)$. 
Suppose that $G$ is publicly known but $w$ is private. 
 Each dataset is a weight function $w$ and two 
datasets $w_1, w_2$ are neighbors when 
$\Vert w_1 - w_2\Vert_1 \deff \sum_{e \in E}\vertt{w_1(e) - w_2(e)} \leq 1$.


In the all pairs shortest distances problem, the target function is $f: \X \to \R^{\binom{n}{2}}$. Where $\X$ is the set of all possible weights on edges of $G$ 
and for a $w \in \X$, $f(w)$ is a vector of shortest distances of all pairs of vertices.

The goal is to efficiently construct a differential private mechanism to approximate $f$ with minimum $\ell_\infty$ error. We call this error the additive error of the algorithm.

The rest of this paper is organized as follows. In the next section, we review the history of differentally private APSD problem and the previous results on this problem. In Section \ref{section:technical-overview}, we try to explain the main challenges to generalizing the Sealfon's approach from trees to low tree-width graphs. Then we describe our solution to overcome these challenges in several stages. Also, we explain our algorithm for differentially private APSD problem informally. A high level justification of the correctness of the algorithm is also provided in that section. Next, we overview some preliminaries and introduce basic definitions and notations. Section \ref{section:algorithm} consists of the formal description of the algorithm, the proof of its correctness and analysing the time complexity of the algorithm. Finally, in the last section we conclude our paper.  

\section{Related Works}

Differential privacy has been applied to graph problems, including the all-pairs shortest distances (APSD for short) problem. Sealfon \cite{Shortest-Paths-and-Distances-with-Differential-Privacy-Sealfon} was the first to formally study APSD with privacy. The paper introduces a model for differentially private analysis of the shortest distances in a weighted graphs in which the graph topology is assumed to be publicly known and the private information consists only of the edge weights. In the DP framework, he required that the algorithm be $(\varepsilon, \delta)$-DP, where neighboring data sets (i.e. inputs) correspond to those whose weight vectors $w,w'$ which differs by at most 1 in the $\ell_1$-distance. Sealfon gave an $O(n\log n/\varepsilon)$-error "input-perturbation" algorithm for APSD, which adds Laplace noise to all edge weights and computes the shortest path in this resulting graph with noisy weights. 

When the underlying graph is a tree, Sealfon developed a DP mechanism with significantly smaller error. His main idea is to employ a standard technique in graph algorithm known as ``Centroid decomposition" of trees. He first finds the unique path from the root to all the vertices and then, for every pair of vertices, with 3 queries, he can compute the distances between the two. 

The idea is to split the tree into subtrees of at most half the size of the original tree. As long as we can release the distance from the root to each subtree with small error, we can then, recurse on the subtrees. Sealfon showed that there is an algorithm that is $\varepsilon$-differentially private on $T$ such that on the input $w : E \to \R^+$, outputs approximate distances between all pairs of vertices. Also, with probability at least $1 - \gamma$, the additive error on each output distance is $O(\log^{2.5} V \cdot \log(1/\gamma)/\varepsilon)$ for any $\gamma \in (0, 1)$.

Fan and Li \cite{fan2022distances} revisited the problem of privately releasing approximate distances between all pairs of vertices in a graph. They proposed improved algorithms with smaller error term to that problem for grid graphs and trees.

Fan et al. \cite{fan2022breaking} also generalized Sealfon's approach for trees to graphs that with removing few nodes become acyclic. The subset of vertices that removing make graph without any cycles called feedback vertex set.

Chen et al. in \cite{chen2023differentially} also studied this problem. They gave an $\varepsilon$-DP algorithm with additive error $\tilde{O}(n^{2/3}/\varepsilon)$ and an $(\varepsilon, \delta)$-DP algorithm with additive error $\tilde{O}(\sqrt{n}/\varepsilon)$ where $n$ denotes the number of vertices. This is the best known additive error for arbitrary graphs.


\section{Technical Overview}
\label{section:technical-overview}

Two of the commonly used ideas of differential privacy, are to add random noise to the input (input perturbation), or to add random noise to the output of the algorithm (output perturbation).
In both cases, the output of the algorithm is a random function which estimates the desired function and has some error. 



For instance, if we add \textit{i.i.d.} Laplace noises to the input, (i.e. edge weights),
and then, compute the shortest path according to noisy weights,
the magnitude of error to achieve $\varepsilon$-DP is $1/\varepsilon$.
\footnote{In order to achieve $\varepsilon$-DP, we can add a Laplace noise with parameter $1/\varepsilon$. In Section \ref{preliminaries} we define the Laplace distribution and see that the magnitude of such noise will be proportional to $1/\varepsilon$ with high probability.}
Since every path in a graph has at most $n-1$ edges, we may need to use $\Theta(n)$ noisy values to compute each shortest distance. Thus, the error is proportional to $\Theta(n \cdot 1/\varepsilon)$ (see Algorithm 3 of \cite{Shortest-Paths-and-Distances-with-Differential-Privacy-Sealfon}).
Alternatively, if we add Laplace noise to the output, i.e. the weights are exact and after computing the shortest distances we add the noise to the result. Since there are $\Theta(n^2)$ pairs of shortest distances, in order to achieve $\varepsilon$-DP, the magnitude of the noise must be $\Theta(n^2/\varepsilon)$. Thus, the error is proportional to $\Theta(n^2/\varepsilon)$ (see Section 4 of \cite{Shortest-Paths-and-Distances-with-Differential-Privacy-Sealfon}).

Note that in the both cases, the error is proportional to the 
product of the magnitude of noise and the number of needed noisy values to release each shortest distance. Therefore, roughly speaking, in order to reduce the magnitude of error, we must reduce either the magnitude of the noises or the number of added noises or both.  

For trees, Sealfon in Section 4.1 of \cite{Shortest-Paths-and-Distances-with-Differential-Privacy-Sealfon} invented an elegant idea to decrease the magnitude of noise and the number of noisy values needed
to compute each of the shortest distances.

The algorithm constructs an intermediate graph,
which has two important properties: First, if we change the weights of the input graph by at most 1, only $O(\log n)$ of the weights in the intermediate graph will change and they change no more than 1. Secondly, for all $v, u \in V$, the weight of an $O(\log n)$-hop shortest path between $v$ and $u$ in the intermediate graph, is equal to the shortest distance between $v$ and $u$ in the input graph. Thus, we can use $O(\log n / \varepsilon)$-magnitude Laplace noises and we only need $O(\log n)$ noisy values to compute each shortest distance. Hence, the magnitude of the error is roughly $O(\log^2 n/\varepsilon)$.
\footnote{The exact value is $O(\log^{2.5} n/\varepsilon)$. The extra log factor is because this property with high probability.}

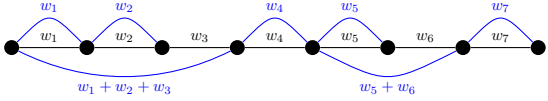
\begin{figure}[htbp]
\label{fig1:example}
\begin{center}
\begin{tikzpicture}
  [scale=1,auto=center]
  \tikzset{graphnode/.style={circle,fill=black!100,minimum size = 2, scale = .4, inner sep=5pt}}
  \tikzset{weightnode/.style={scale = .6}}
  \node[graphnode] (n1) at (0,0)  {};
  \node[graphnode] (n2) at (1,0)  {};
  \node[graphnode] (n3) at (2,0)  {};
  \node[graphnode] (n4) at (3,0)  {};
  \node[graphnode] (n5) at (4,0)  {};
  \node[graphnode] (n6) at (5,0)  {};
  \node[graphnode] (n7) at (6,0)  {};
  \node[graphnode] (n8) at (7,0)  {};
  
  \foreach \from/\to/\weight in {n1/n2/w_1,n2/n3/w_2,n3/n4/w_3,n4/n5/w_4,n5/n6/w_5,n6/n7/w_6,n7/n8/w_7}
    \draw (\from) -- node[weightnode,above] {$\weight$} (\to) ; 
  
  \foreach \x/\y in {1/2,2/3,4/5,5/6,7/8}
  \draw[blue] (n\x) .. controls (\x - 0.5,0.5) ..node[weightnode, above] {$w_\x$}(n\y);
  \draw[blue] (n1) .. controls (1, - 0.5) and (2, -0.5) ..node[weightnode, below] {$w_1 + w_2 + w_3$}(n4);
  \draw[blue] (n5) .. controls (5, - 0.5) ..node[weightnode, below] {$w_5 + w_6$}(n7);

\end{tikzpicture}
\caption{Making an intermediate graph for a path. black edges are the base graph and the blue edges are the intermediate graph.}
\label{figure:sealfonExample}
\end{center}

\end{figure}

\begin{example}[Sealfon's construction of the intermediate graph for a path ]
Figure \ref{figure:sealfonExample} gives
an example of the graph made by Sealfon's algorithm.
The $\ell_1$ sensitivity of the graph is at most 2, i.e. changing the weight of a black edge affects at most 2 of the blue edges
\end{example}

In this paper, we aim to extend the Sealfon's idea beyond trees. We address several challenges due to inherent differences between trees and low tree-width graphs. These challenges include:
\begin{enumerate}
    \item The edges of the shortest path between two specific vertices in a tree, only depend on the publicly known topology of the graph and they are unique. 
    So the desired algorithm, only needs to privately release the summation of private weights of the edges of the unique paths.
    Clearly, this property is not true for non-tree graphs. \vspace{-3mm} \label{challenge:unique}
    \item In trees, due to uniqueness of shortest paths, it is possible to calculate the all pairs shortest distances using only the shortest distances between the root and other vertices. This can be achieved by computing the lowest common ancestor for each pairs of vertices and calculating the distances from the root to every vertex.
    This property of trees allows for efficient calculation of all pairs shortest distances using only the root to all distances. \vspace{1mm} \label{challenge:root-to-all}
    \item 
    Trees have a vertex called the centroid. The centroid is a vertex that, removing it, separates the tree into connected components which have sizes no greater than half the size of the original tree.
    \label{challenge:centroid}
\end{enumerate}


\vspace{0.2cm}
This paper introduces Algorithm \ref{alg:apsd-low-tree-width-main} that calculates all pairs shortest distances with differential privacy on low tree-width graphs and overcome the challenges above.
The algorithm takes $G=(V, E)$ and $T$ as input.
Where $T$ is tree decomposition of $G$ and $T$ has width $p$. Also assume that $\vert V \vert = n$.
The algorithm consists of three stages. In the first stage, we construct an intermediate graph with the properties that mentioned above,
in the second stage, we ensure the differential privacy by adding noises to the intermediate graph and in the last one, the algorithm estimates all pairs shortest distances.
\\
\textbf{Stage 1:} Algorithm \ref{alg:cunstruct-shortcut-graph}, uses a divide and conquer approach that constructs an intermediate graph $G^\prime$ by adding edges to the original graph. By proposition \ref{prop:tree-width-implies-separator}, the algorithm exploits $T$ and finds a subset of vertices, called separator, which removing it partitions graph into components with half size. Thus, the depth of recursion call of the algorithm will be $O(\log n)$. We use separator instead of centroid to address challenge \ref{challenge:centroid}.

The edges of intermediate graph are computed by the \textproc{ComputeEdges} recursive function in Algorithm \ref{alg:compute-shortcut-edges}. This function takes as input, a weighted graph $G=(V, E)$, a tree decomposition $T$ of $G$, and a subset of vertices $V_0$ of $G$. The union of $V_0$ and separator act like root in tree. we call the union, $V^\prime_0$. We deal challenge \ref{challenge:root-to-all} using $V^\prime_0$. In each recursion call of \textproc{ComputeEdges}, lemma \ref{lemma:FewToAllShortcuts-Length-on-v0S} shows that  the shortest distances between all $v \in V^\prime_0$ and $V$ is equals to $O(\log n)$-hop shortest distance between $v$ and $u$ in the returned graph.
\\
We need to extend the result to all pairs shortest distances. To do so, we need a wise selection of smaller problems and $V_0$. We construct smaller versions of the problem by combining the connected components after removing the separator and the separator itself. We also propagate $V^\prime_0$ for $V_0$ of smaller versions of the problem in the divide and conquer approach. In Lemma \ref{lemma:Gprime-and-G-APSD-is-same}, we show that with this selection, we can extend the result from $V_0^\prime$ to $V$ for all pairs shortest distances.
\\
\textbf{Stage 2:} Lemma \ref{lemma:sensitiviy-of-algorithm-is-low} shows that the sensitivity of the intermediate graph is low. Thus, the algorithm then adds Laplace noise to the edge weights of $G^\prime$ to provide differential privacy. The scale of the Laplace noise is $O(p^2 \log^2 n / \varepsilon)$, where $p$ is the tree-width of the graph.
\\
\textbf{Stage 3:} In the description of stage 1, we mentioned that for each $v, u \in V$, the shortest distance between $v$ and $u$ is equal to an $O(\log n)$-hop shortest distance in the intermediate graph. The algorithm uses post-processing Proposition \ref{Post-Processing-proposition} to return all-pairs shortest path distances for paths with at most $O(\log n)$ hops in the graph $G^\prime$ with noisy edge weights. The minimization of the weight of all $O(\log n)$-hop paths addresses challenge \ref{challenge:unique}.

We uses Laplace noises, thus, with high probability, the magnitude of each noise is proportional to $O(\log n \cdot p^2 \log^2 n / \varepsilon)$. We minimize over $O(\log n)$-hop paths, thus we use only $O(\log n)$ noises to estimate each shortest distance. Thus, with high probability, the magnitude of the additive error of each shortest distance is at most $O(p^2 \log^4 n / \varepsilon)$. This result outperforms previous results in $\varepsilon$-DP when $p = o(n^{1/3})$.

\begin{onlyarxiv}
In section \ref{section:appendix}, we provide a detailed description of the technical aspects of the algorithm and its proof. 
\end{onlyarxiv}
\begin{onlyisncc}
Proofs of the technical lemmas are omitted due to the space constraint but can be found in \cite{arxivversion}. 
\end{onlyisncc}

\section{Preliminaries}
\label{preliminaries}

In this section we describe the main concepts of graph theory and differential privacy, that are used in this paper.

\subsection{Graph Theory}
Let $G=(V, E)$ denote an undirected graph with vertex set $V$ and edge set $E$. we also show $V$ and $E$ by $V(G)$ and $E(G)$ respectively and let $w: E \rightarrow \R^{+}$ be a weight function. Let $\vert V\vert$ and $\vert E\vert$ be the number of vertices and edges, respectively.
For $X, Y \subseteq V$, let donate $E[X, Y]$ be all edges between $X$ and $Y$.

Let $\mathcal{P}^G_{x y}$ denote the set of paths between a pair of vertices $x, y \in V(G)$. For any path $P \in \mathcal{P}^G_{x y}$, the weight $w(P)$ is the sum $\sum_{e \in P} w(e)$ of the weights of the edges of $P$. The distance $d_{G, w}(x, y)$ from $x$ to $y$ denotes the weighted shortest distance $\min_{P \in \mathcal{P}^G_{x y}} w(P)$. We will denote the hop length $\ell(P)$ of path $P=\left(v_0, \ldots, v_{\ell}\right)$ is the number $\ell$ of edges on the path. 
Also the $k$-hop distance $d_{G, w}^{k}(x, y)$ between $x$ and $y$ denotes as the minimum weight between at most $k$-hop paths $d_{G, w}^k = \min _{P \in \mathcal{P}^G_{x y}, \ell(P) \leq k} w(P)$.




\begin{de}[Tree decomposition]
\cite{diestel2010graph,bodlaender1998tourist}
A tree decomposition of a graph $G(V, E)$ is a labeled tree $T$, where each node $i$ of $T$ is labeled by a subset (bag) $B_i \subset V$ of vertices of $G$, each edge of $G$ is in a subgraph induced by at least one of the $B_i$, and the nodes of $T$ labeled by any vertex $v \in V$ are connected in $T$. 
The width of $T$ is maximum of cardinality of bags of $T$ minus 1.

\end{de}

\begin{de}[Tree-width]
\cite{diestel2010graph,bodlaender1998tourist}
The tree-width of $G$ is the minimum integer $p$ such that there exists a tree decomposition $G$ with width $p$.
\end{de}
It is known that a tree has tree-width 1, a series-parallel graph has tree-width 2, a $k$-clique has tree-width $k-1$, and an $n$ by $n$ grid has tree-width $\Theta(n)$.

\begin{pro}[Reduced tree decomposition]
\label{prop:tree-decomposition-reduction}
Let $G = (V, E)$ be a graph of tree-width $k$, and let $H = (V', E')$ be a subgraph of $G$. If $T$ a width-$k$ tree decomposition of $G$, then, by removing nodes that are not in $V'$ from the bags of $T$ and removing any bags that become subsets of other bags, we obtain a reduced tree decomposition $T'$ of $H$ with width $k$, and no bag of $T'$ being a subset of another.
\end{pro}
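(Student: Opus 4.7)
I would split the construction into two steps and verify the tree decomposition axioms after each: (A) shrink every bag by discarding vertices not in $V'$; (B) iteratively delete a bag whenever it is contained in some other bag, rerouting its tree neighbors. Then I would argue that the outcome of (A)+(B) is exactly the decomposition $T'$ described in the proposition, and that it has width $\le k$ with no bag contained in another.

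For step (A), all three axioms are straightforward. Width can only decrease, so it stays $\le k$. For every edge $\{u,v\}\in E'\subseteq E$, both endpoints lie in $V'$, and $\{u,v\}$ was covered by some bag $B_i$ in $T$, so it is still covered by $B_i\cap V'$ after shrinking. For the connectivity axiom, fix $v\in V'$: the set of nodes of $T$ whose (shrunk) bag contains $v$ is exactly the set of nodes of $T$ whose original bag contained $v$, and this set was already connected in $T$.

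For step (B), the key observation is the following claim: if $B_i\subseteq B_j$ in the current tree decomposition, and $i=i_0,i_1,\ldots,i_m=j$ is the unique path between them, then by the connectivity axiom every $v\in B_i$ also lies in every bag $B_{i_\ell}$ along this path, so in particular $B_i\subseteq B_{i_1}$, where $i_1$ is a neighbor of $i$. Hence whenever a bag is contained in some other bag, it is already contained in one of its own neighbors. I can therefore remove node $i$ from the tree and reattach each remaining neighbor of $i$ directly to $i_1$: this preserves edge coverage (anything covered by $B_i$ is covered by $B_{i_1}\supseteq B_i$), width (no bag grows), and connectivity (any $v$ that was in $B_i$ is also in $B_{i_1}$, and the former subtree of $v$ stays connected through $i_1$ instead of $i$; any $v\notin B_i$ is unaffected). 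Since each such deletion strictly decreases the number of bags, the process terminates at a decomposition $T'$ with no bag a subset of another.

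The main obstacle is the subset-removal step, because one must handle the case where the containing bag $B_j$ is far away from $B_i$ in the tree; the claim above reduces this to the adjacent case and is what makes the greedy deletion legal. The rest is essentially bookkeeping on the tree decomposition axioms.
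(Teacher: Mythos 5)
Your proof is correct. There is, however, nothing in the paper to compare it against: Proposition~\ref{prop:tree-decomposition-reduction} is stated in the preliminaries as a standard fact about tree decompositions and is given no proof anywhere in the paper, including the appendix (which only proves Lemmas~\ref{lemma:DepthOfAlgorithmisNotBig}--\ref{lemma:construct-intermediate-graph-compleixy}). Your argument is the standard one and is sound. Step (A) is routine, and you correctly identify the one nontrivial point in step (B): if $B_i\subseteq B_j$, the connectivity axiom applied to each $v\in B_i$ forces $v$ into every bag on the unique $i$--$j$ path, so $B_i$ is already contained in the bag of the neighbor of $i$ toward $j$; this reduces containment to the adjacent case and makes the greedy deletion (with rerouting of $i$'s other neighbors to $i_1$) preserve all three axioms. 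Two minor remarks: the proposition's phrase ``width $k$'' should be read as ``width at most $k$,'' which is what your argument establishes and is all the algorithm needs on line~\ref{alg:line:def-ui} of Algorithm~\ref{alg:compute-shortcut-edges}; and a bag that becomes empty after step (A) is a subset of every other bag and is eliminated by your step (B) except in the degenerate case where it is the only node, which is harmless here.
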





\begin{pro}[Tree-width implies separator] 
\label{prop:tree-width-implies-separator}
\cite{feige2012treewidth}
Let $G(V, E)$ be a graph of tree-width $p$.
Then, there is a set $S$ of at most $p+1$ such that every connected component of $G \setminus S$ has no more than half of vertices. We call this set the separator of graph and we can find it with $O(\vert V \vert^2)$ time complexity by tree decomposition of graph.
\end{pro}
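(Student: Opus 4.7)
The plan is to produce the separator as a single bag of the given tree decomposition, chosen to be a ``centroid bag'' with respect to a suitable weighting. Since each bag has at most $p+1$ vertices, this immediately gives the bound $|S|\le p+1$. The correctness and time-complexity claims then follow from a standard centroid argument on the tree decomposition, combined with the running-intersection property.

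First I would set up the weighting. Given the tree decomposition $T$ of $G$ with width $p$, for each vertex $v\in V$ pick an arbitrary bag $h(v)\in V(T)$ such that $v\in B_{h(v)}$, and define a node-weight $f:V(T)\to\mathbb{Z}_{\ge 0}$ by $f(i)=|\{v\in V:h(v)=i\}|$. By construction, $\sum_{i\in V(T)} f(i)=|V|=n$, so I can apply the standard tree-centroid lemma to $(T,f)$: there exists a node $i^\star\in V(T)$ such that every connected component of $T\setminus\{i^\star\}$ has total $f$-weight at most $n/2$. I will set $S:=B_{i^\star}$, which has size at most $p+1$.

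Next I would verify that every connected component of $G\setminus S$ has at most $n/2$ vertices. Let $T_1,\dots,T_r$ be the subtrees of $T\setminus\{i^\star\}$, and let $V_j:=(\bigcup_{i\in V(T_j)} B_i)\setminus B_{i^\star}$. The running-intersection property of tree decompositions implies that any edge of $G$ with one endpoint in $V_j$ and the other in $V_k$ (for $j\ne k$) would force a common bag containing both endpoints; such a bag cannot lie in both $T_j$ and $T_k$, and the path of bags containing a fixed vertex must be connected in $T$, which quickly yields a contradiction. Hence every connected component of $G\setminus S$ is contained in some $V_j$. Now I would prove the key size estimate $|V_j|\le n/2$: if $v\in V_j$, then $v\notin B_{i^\star}$, and since the set of bags containing $v$ forms a connected subtree of $T$ that is disjoint from $\{i^\star\}$, that subtree lies entirely inside $T_j$; in particular $h(v)\in V(T_j)$. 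Thus $|V_j|\le \sum_{i\in V(T_j)} f(i)\le n/2$ by the centroid property.

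Finally, for the running-time claim, I would describe an explicit algorithm: compute the home-bag function $h$ by a single pass over the bags (time $O(n\cdot|V(T)|)$), then find a centroid of $(T,f)$ by the textbook linear-time procedure (root $T$ arbitrarily, compute subtree weights by one DFS, then walk from the root toward the heaviest child until no child-subtree exceeds $n/2$). Since the tree decomposition has $O(n)$ nodes in reduced form (or can be reduced via Proposition~\ref{prop:tree-decomposition-reduction}), the overall cost is $O(|V|^2)$ as claimed.

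The main obstacle I expect is the component-containment step: formally deducing from the tree-decomposition axioms that no edge of $G\setminus S$ crosses between two different $V_j$'s, and that a vertex lying outside $B_{i^\star}$ has its home bag in the same subtree that contains it. Both are consequences of the connectedness axiom (the bags containing any fixed $v$ induce a subtree of $T$), but writing them carefully is where the real content of the proof lies; the rest is bookkeeping around the centroid lemma.
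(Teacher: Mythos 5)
Your proof is correct, and it is the standard ``centroid bag'' argument for this folklore fact. Note that the paper itself gives no proof of this proposition --- it is imported verbatim from the cited reference --- so there is no in-paper argument to compare against; your write-up is essentially the proof one would find in the literature. All the delicate points check out: the weighted tree-centroid lemma holds for arbitrary non-negative node weights (walk toward the overweight component; you never return because the component you came from has weight below $n/2$); the containment of each component of $G\setminus B_{i^\star}$ in a single $V_j$ follows from the edge-coverage axiom plus the connectedness axiom exactly as you sketch; and the bound $|V_j|\le\sum_{i\in V(T_j)}f(i)$ is justified because a vertex outside $B_{i^\star}$ has its entire subtree of bags, hence its home bag, inside one $T_j$. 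The only caveat for the $O(|V|^2)$ running time is that the decomposition must first be reduced so that $|V(T)|=O(|V|)$, which you correctly handle via the paper's reduction proposition.
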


\subsection{Differential Privacy}

We now formally define differential privacy in the private edge weight model.

\begin{de}[Neighboring weights] For any edge set $E$, two weight functions $w, w^{\prime}: E \rightarrow \R^{+}$ are neighboring, 
denoted $w \sim w^{\prime}$, if
$$
\left\|w-w^{\prime}\right\|_1=\sum_{e \in E}\left|w(e)-w^{\prime}(e)\right| \leq 1
$$
\end{de}

\begin{de}[Mechanism]
        A randomized algorithm $\M$
with domain $A$ and discrete range $B$ is associated with a mapping
$\M: A \to \Delta(B)$. On input $a \in A$, the algorithm $\M$ outputs $\M(a) = b$
with probability $(\M(a))_b$ for each $b \in B$. The probability space is over the coin flips of the algorithm $\M$.
\end{de}

\begin{de}[Differential privacy, \cite{Calibrating-Noise-to-Sensitivity-in-Private-Data-Analysis-Dwork-etal}]
        A randomized algorithm $\M: A \to R$ is $(\varepsilon, \delta)$-differentially private if for all $\mathcal{S} \subseteq R$ and for all $x, y \in R$ such that $x \sim y$ :
$$
\operatorname{Pr}[\M(x) \in \mathcal{S}] \leq \exp (\varepsilon) \operatorname{Pr}[\M(y) \in \mathcal{S}]+\delta
$$
where the probability space is over the coin flips of the mechanism $\M$.
\end{de}

If $\delta = 0$, we use $\epsilon$-differential privacy or $\epsilon$-DP as $\epsilon$-differential privacy.
    
\begin{de}[Sensitivity of a function]
    The $\ell_1$-sensitivity of a function $f: A \to \R^k$ is:
    $$\Delta_1 f = \max_{x,y \in A, x \sim y}  \Vert f(x) - f(y) \Vert_1 $$
\end{de}

The Laplace distribution with scale parameter $b > 0$ is defined by the probability density function
$$f(x) = \frac{1}{2b} \exp\left(-\frac{|x|}{b}\right).$$

For a random variable $X \sim \text{Lap}(b)$, the probability that $\vertt{X} \geq t\cdot b$ for some $t > 0$ is given by
$\Pr(\vertt{X} \geq t\cdot b) = 2 \exp(-t).$

\begin{de}[Laplace mechanism]
        Given any function $f : A \rightarrow \R^k$, the Laplace mechanism is defined as:
$$
\M_L(x, f(\cdot), \varepsilon)=f(x)+\left(Y_1, \ldots, Y_k\right)
$$
where $Y_i$ are \textit{i.i.d.} random variables drawn from $\text{Lap}(\Delta_1 f / \varepsilon)$.
\end{de}

In order to be able to use the Laplace mechanism, we should know that it preserve the privacy of the algorithm. Theorem 3.6 of \cite{The-algorithmic-foundation-of-dp} states as bellow.

\begin{pro}[Laplace mechanism is $\varepsilon$-DP \cite{The-algorithmic-foundation-of-dp}]
\label{Proposition:laplace-is-e-private}
The Laplace mechanism 
$\M_L(x, f(\cdot), \varepsilon)$
is $\varepsilon$-DP.
\end{pro}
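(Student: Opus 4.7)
The plan is to establish the pointwise density ratio bound between the output distributions on two neighboring inputs, and then integrate to get the required inequality on measurable sets. Fix $x, y \in A$ with $x \sim y$, and let $b = \Delta_1 f / \varepsilon$. Write $p_x$ and $p_y$ for the densities of $\M_L(x, f(\cdot), \varepsilon)$ and $\M_L(y, f(\cdot), \varepsilon)$ respectively on $\R^k$. By independence of the coordinates $Y_1, \dots, Y_k$, we have
\[
p_x(z) = \prod_{i=1}^k \frac{1}{2b}\exp\!\left(-\frac{|z_i - f(x)_i|}{b}\right),
\]
and similarly for $p_y(z)$ with $f(y)_i$ in place of $f(x)_i$.

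Next I would compute the ratio $p_x(z)/p_y(z)$ directly, which telescopes to
\[
\frac{p_x(z)}{p_y(z)} = \prod_{i=1}^k \exp\!\left(\frac{|z_i - f(y)_i| - |z_i - f(x)_i|}{b}\right).
\]
The key step is to apply the reverse triangle inequality coordinatewise: $|z_i - f(y)_i| - |z_i - f(x)_i| \leq |f(x)_i - f(y)_i|$. Summing the exponents over $i$ and using the definition of $\ell_1$-sensitivity together with $x \sim y$,
\[
\frac{p_x(z)}{p_y(z)} \leq \exp\!\left(\frac{\|f(x) - f(y)\|_1}{b}\right) \leq \exp\!\left(\frac{\Delta_1 f}{b}\right) = \exp(\varepsilon).
\]

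Finally, I would lift this pointwise bound to arbitrary measurable $\mathcal{S} \subseteq \R^k$ by integrating: $\Pr[\M_L(x) \in \mathcal{S}] = \int_{\mathcal{S}} p_x(z)\,dz \leq \exp(\varepsilon) \int_{\mathcal{S}} p_y(z)\,dz = \exp(\varepsilon) \Pr[\M_L(y) \in \mathcal{S}]$, which establishes $\varepsilon$-DP (the $\delta = 0$ case). The only real obstacle is making sure the absolute-value manipulation is done cleanly; the rest is mechanical. The proof does not require any graph-theoretic machinery from the preceding sections, so no appeal to the tree-decomposition propositions is needed here.
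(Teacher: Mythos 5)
Your proof is correct and is exactly the standard density-ratio argument for the Laplace mechanism (Theorem 3.6 of Dwork--Roth), which the paper itself does not reprove but simply cites. Nothing to add: the triangle-inequality step and the integration over $\mathcal{S}$ are handled cleanly, and you are right that no graph-theoretic machinery is involved.
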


\begin{pro}[Post-Processing] \label{Post-Processing-proposition}
Let $\M: \R^{\vertt{\X}} \rightarrow R$ be a randomized algorithm that is $(\varepsilon, \delta)$-differentially private. Let $f: R \rightarrow \R^{\prime}$ be an arbitrary randomized mapping. Then, $f \circ \M: \R^{\vertt{\X}} \rightarrow \R^{\prime}$ is $(\varepsilon, \delta)-$ differentially private.
\end{pro}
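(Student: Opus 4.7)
The plan is to reduce the randomized case to the deterministic case and then appeal directly to the definition of $(\varepsilon,\delta)$-differential privacy. First I would observe that any randomized mapping $f: R \to \R'$ can be represented as $f(r) = g(r, \omega)$ for a deterministic function $g$ and an auxiliary random variable $\omega$ drawn independently of the randomness of $\M$. Since the randomness of $f$ is independent of both the input and of $\M$, it suffices to prove the statement for deterministic $f$ and then take expectation over $\omega$.

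For the deterministic case, fix neighboring inputs $x \sim y$ and an arbitrary measurable set $\mathcal{S} \subseteq \R'$. The key identity is
\[
\Pr[f(\M(x)) \in \mathcal{S}] = \Pr[\M(x) \in f^{-1}(\mathcal{S})],
\]
where $f^{-1}(\mathcal{S}) = \{r \in R : f(r) \in \mathcal{S}\}$ is a subset of $R$. Applying the $(\varepsilon, \delta)$-DP guarantee of $\M$ to the event $f^{-1}(\mathcal{S}) \subseteq R$ yields
\[
\Pr[\M(x) \in f^{-1}(\mathcal{S})] \leq \exp(\varepsilon)\, \Pr[\M(y) \in f^{-1}(\mathcal{S})] + \delta,
\]
and rewriting the right-hand side as $\exp(\varepsilon)\, \Pr[f(\M(y)) \in \mathcal{S}] + \delta$ finishes the deterministic case.

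To extend to randomized $f$, I would condition on $\omega$: for each fixed value of $\omega$, the map $g(\cdot,\omega)$ is deterministic, so the inequality above holds. Taking expectation over $\omega$ (which is independent of everything else) preserves the inequality because the bound is linear in the probabilities on each side; explicitly,
\[
\Pr[f(\M(x)) \in \mathcal{S}] = \mathbb{E}_\omega\!\left[\Pr[g(\M(x),\omega) \in \mathcal{S}]\right] \leq \mathbb{E}_\omega\!\left[\exp(\varepsilon)\,\Pr[g(\M(y),\omega) \in \mathcal{S}] + \delta\right],
\]
and the right-hand side equals $\exp(\varepsilon)\,\Pr[f(\M(y)) \in \mathcal{S}] + \delta$.

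The only mild obstacle is the formalism for the randomized mapping: one has to justify that $f$ admits a representation of the form $g(r,\omega)$ with $\omega$ independent of $\M$'s coins, which is standard but worth a sentence. All other steps are purely definitional, so I do not expect any real difficulty beyond this bookkeeping.
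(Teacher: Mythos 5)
Your proof is correct, and it is the standard argument (essentially Proposition 2.1 of Dwork and Roth's \emph{The Algorithmic Foundations of Differential Privacy}): handle deterministic $f$ by pulling the event back through $f^{-1}$ and applying the privacy guarantee of $\M$ to the preimage set, then treat a randomized $f$ as a mixture of deterministic maps over its independent coins $\omega$ and use linearity of expectation, noting that $\mathbb{E}_\omega[\exp(\varepsilon)\Pr[\cdot]+\delta]=\exp(\varepsilon)\,\mathbb{E}_\omega[\Pr[\cdot]]+\delta$. The paper itself offers no proof of this proposition --- it is quoted as a known result from the differential privacy literature --- so there is no alternative argument to compare against; your write-up fills that gap correctly, and the only point deserving the extra sentence you already flagged is the measurability/representation of $f$ as $g(r,\omega)$ with $\omega$ independent of $\M$'s randomness.
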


\section{Main result}
\label{section:algorithm}
In Section \ref{section:technical-overview}, we described the intuition behind the algorithm and its properties. In this section, we will present the details of the algorithm implementation in pseudo-code form. Next, we will present the technical aspects of its properties and correctness.

\begin{algorithm}[b!]
\caption{Determining Shortcut Edges for Addition to The Intermediate Graph}
\label{alg:compute-shortcut-edges}
\begin{algorithmic}[1]
    \Function{ComputeEdges}{$G=(V, E), w: E \to \R^+, T, V_0 \subseteq V$}
        \Require{$G$ is the graph with weight function $w$. $T$ is the tree decomposition of $G$ with maximum bag size $p+1$, and $V_0$ is a helper argument called the starting set.}
        \Ensure{A list of triples $(v, u, x)$ where $v, u \in V$ and $x \in \R^+$ that should be added to the intermediate graph for computing distances with few hops}
        \State $R \gets \emptyset$ \Comment{Array $R$ is a list of weighted edges}
        \If{$|V| \leq 6(p+1)$}
            \ForAll{$v \in V$}
                \ForAll{$u \in V$}
                    \State Insert $(v, u, d_{G, w}(v, u))$ into $R$
                \EndFor
            \EndFor
        \Else
            \Comment{Partition the graph into smaller components using a bag from the tree decomposition}
            \State\label{alg:line:use-s} Let $S$ be the bag of $T$ that partitions $G$ into components of size at most $|V|/2$ by proposition \ref{prop:tree-width-implies-separator}
            \State $V_0^\prime \gets V_0 \cup S$
            \State Let the connected components of $G\setminus S$ be $C_1, \ldots, C_l$
            
            \Comment{Add edges between vertices in the starting set and separator}
            \ForAll{$v \in V_0^\prime$}
                \ForAll{$u \in S$}
                    \State Insert $(v, u, d_{G, w}(v, u))$ into $R$
                \EndFor
            \EndFor
            
            \Comment{Recursively compute edges for each component}
            \ForAll{$i \in [l]$}
                \State\label{alg:line:def-ui} $H_i \gets (V(C_i) \cup S, E(C_i) \cup E[C_i, S])$
                \State $V_i \gets (V(C_i) \cap V_0^\prime)$
                \State Let $T_i$ be the reduced tree decomposition $T$ for $H_i$ using Proposition \ref{prop:tree-decomposition-reduction}
                \State Let $w_i$ be the restriction of function $w$ to edges in $H_i$
                \State $R_i \gets $\Call{ComputeEdges}{$H_i, w_i, T_i, V_i$}
                \ForAll{$(v, u, x) \in R_i$}
                    \State Insert $(v, u, x)$ into $R$
                \EndFor
            \EndFor
        \EndIf
        \State\Return $R$
    \EndFunction
\end{algorithmic}
\end{algorithm}
\begin{algorithm}[b!]
\caption{Constructing a Low $\ell_1$-Sensitivity Intermediate Graph for Efficient APSD Computation}
\label{alg:cunstruct-shortcut-graph}
\begin{algorithmic}[1]
	\Function{ConstructGraph}{$G=(V, E), w: E \to \R^+, T, V_0 \subseteq V$}
	\Require{Graph $G$ with weight function $w$, tree decomposition $T$ of $G$ with maximum bag size $p+1$, and starting set $V_0$}
	\Ensure{A graph in which all-pair shortest path distances of $G$ can be computed using only $O(\log \vert V \vert)$-hop paths}
    	\State $R \gets $ \Call{ComputeEdges}{$G, V_0$}
    	\State Create a copy of $G$ and call it $G^\prime$
    	\State Create a copy of $w$ and call it $w^\prime$
    	
    	\Comment{Add the computed shortcut edges to the new graph}
    	\ForAll{$(v, u, x) \in R$}
    	    \If {there is no edge between $v$ and $u$}
                \State Add edge between $v$ and $u$ in $G^\prime$
                \State Set $w^\prime(v, u) \gets x$
            \ElsIf{$x < w^\prime(v, u)$}
                \State Set $w^\prime(v, u) \gets x$
    	    \EndIf
    	\EndFor
    	\State \Return ($G^\prime$, $w^\prime$)
	\EndFunction
\end{algorithmic}
\end{algorithm}

    	
	
	
	

The following theorem, states that Algorithm \ref{alg:apsd-low-tree-width-main} has the desired properties. 

\begin{theo}[Main theorem]
\label{theorem:main}
Let $G = (V, E)$ be a graph of tree-width $p$. Then, there exists an $\varepsilon$-differentially private algorithm that takes as input a weight function $w: E \to \R^+$ and releases the all-pairs shortest path distances of $G$ with weight $w$. For any $\gamma \in (0, 1)$, the error is bounded by $O(\log(1/\gamma) \cdot p^2 \cdot \log^4(|V|)/\varepsilon)$ with probability at most $1-\gamma$. 
\end{theo}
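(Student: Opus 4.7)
The plan is to follow the three-stage structure outlined in Section~\ref{section:technical-overview}: build a low-sensitivity intermediate graph $G'$ in which every true shortest distance is realized by an $O(\log n)$-hop path, perturb $G'$ via the Laplace mechanism, and post-process by taking bounded-hop shortest paths. Concretely, first apply Algorithm~\ref{alg:cunstruct-shortcut-graph} to $(G,w,T)$ with starting set $V_0 = \emptyset$ to obtain $(G',w')$. From Lemma~\ref{lemma:Gprime-and-G-APSD-is-same}, for every $x,y \in V$, $d_{G,w}(x,y)$ equals the minimum weight of an $O(\log n)$-hop $x$--$y$ path in $G'$; from Lemma~\ref{lemma:sensitiviy-of-algorithm-is-low}, the $\ell_1$-sensitivity of the map $w \mapsto w'$ is at most $\Delta = O(p^2 \log^2 n)$.

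For privacy, perturb each edge of $G'$ independently by $\text{Lap}(\Delta/\varepsilon)$ to obtain $\tilde{w}'$. Proposition~\ref{Proposition:laplace-is-e-private} certifies that releasing $\tilde{w}'$ is $\varepsilon$-DP, and since the final output is a deterministic function of $\tilde{w}'$ (the all-pairs minimum weight over $O(\log n)$-hop paths), Proposition~\ref{Post-Processing-proposition} preserves $\varepsilon$-DP end-to-end.

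For accuracy, fix $\gamma \in (0,1)$. The Laplace tail gives $\Pr[|\text{Lap}(b)| \ge tb] = 2e^{-t}$; union-bounding over the at most $\binom{n}{2}$ edges of $G'$, with probability at least $1-\gamma$ every noise satisfies $|Y_e| \le (\Delta/\varepsilon)\ln(2n^2/\gamma) = O((\log n + \log(1/\gamma))\, p^2 \log^2 n / \varepsilon)$ simultaneously. For any pair $x,y$, comparing the noisy path that realizes the algorithm's output with the true-shortest $O(\log n)$-hop path in $G'$ shows that the output differs from $d_{G,w}(x,y)$ by at most $O(\log n)\cdot \max_e |Y_e|$, yielding the claimed $O(\log(1/\gamma)\, p^2 \log^4 n / \varepsilon)$ error on every pair.

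The main obstacle is not this three-stage wrap-up but verifying the two structural lemmas that feed it. Lemma~\ref{lemma:Gprime-and-G-APSD-is-same} requires an inductive argument over the recursion of \textproc{ComputeEdges}: any shortest path in $G$ must cross some separator $S$ chosen at line~\ref{alg:line:use-s}, the resulting pieces must fall inside the recursive subgraphs $H_i$ with properly propagated starting sets $V_i$, and the exact-weight shortcuts added between $V_0'$ and $S$ must stitch the pieces back into an $O(\log n)$-hop path in $G'$. Lemma~\ref{lemma:sensitiviy-of-algorithm-is-low} requires bounding, per base edge, the number and magnitude of shortcut weights it can influence across the depth-$O(\log n)$ recursion tree whose bags have size at most $p+1$; this is where the $p^2$ factor enters. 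Granted these two lemmas, the theorem is assembled as above.
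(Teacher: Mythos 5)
Your proposal follows exactly the paper's own argument: construct $G'$ with Algorithm~\ref{alg:cunstruct-shortcut-graph}, invoke Lemma~\ref{lemma:sensitiviy-of-algorithm-is-low} to calibrate the Laplace noise to sensitivity $O(p^2\log^2|V|)$, apply Propositions~\ref{Proposition:laplace-is-e-private} and~\ref{Post-Processing-proposition} for privacy, and combine Lemma~\ref{lemma:Gprime-and-G-APSD-is-same} with a Laplace tail bound union-bounded over the at most $|V|^2$ edges of $G'$ and the $O(\log|V|)$ hop count to get the stated error. This matches the paper's proof in structure and in every quantitative step, and you correctly identify that the substantive work lives in the two structural lemmas.
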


To prove that Algorithm \ref{alg:apsd-low-tree-width-main} satisfies the conditions of our main theorem (Theorem \ref{theorem:main}), we need to establish several properties of Algorithms \ref{alg:compute-shortcut-edges}, \ref{alg:cunstruct-shortcut-graph}, and \ref{alg:apsd-low-tree-width-main}. 

We present these properties in the form of following lemmas. 
\begin{onlyarxiv}
The proofs can be found in Section \ref{section:appendix}.
\end{onlyarxiv}
\begin{onlyisncc}
The proofs can be found in appendix of \cite{arxivversion}.
\end{onlyisncc}

The first property we will show is that the recursion depth of Algorithm \ref{alg:compute-shortcut-edges} is not too large.
\begin{lem}[Recursion depth of algorithm is logarithmic]
\label{lemma:DepthOfAlgorithmisNotBig}
Let $G = (V, E)$ be a graph of tree-width $p$ and let $T$ be its corresponding tree-decomposition. Then, the recursion depth of Algorithm \ref{alg:compute-shortcut-edges} on input $G$ and $T$ is at most $O(\log \vert V \vert)$.
\end{lem}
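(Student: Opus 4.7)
The plan is to bound the size of each subproblem passed to a recursive call, then argue that this size decreases by a constant factor at each level. Concretely, I would track $|V|$ as the algorithm recurses: the base case triggers whenever $|V| \leq 6(p+1)$, so on every recursive call we have $|V| > 6(p+1)$, which gives $p+1 < |V|/6$.

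Next I would bound $|V(H_i)|$ for each recursive subproblem. By construction on line \ref{alg:line:def-ui}, $H_i$ has vertex set $V(C_i) \cup S$. Since $S$ is the bag of $T$ given by Proposition \ref{prop:tree-width-implies-separator}, we have $|S| \leq p+1$, and each component $C_i$ of $G \setminus S$ satisfies $|V(C_i)| \leq |V|/2$. Combining these,
\[
|V(H_i)| \leq |V(C_i)| + |S| \leq \tfrac{|V|}{2} + (p+1) < \tfrac{|V|}{2} + \tfrac{|V|}{6} = \tfrac{2|V|}{3}.
\]

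With this in hand, if $D(n)$ denotes the maximum recursion depth of \textproc{ComputeEdges} on an input of size $n$, then $D(n) = 0$ for $n \leq 6(p+1)$ and $D(n) \leq 1 + D(\lfloor 2n/3 \rfloor)$ otherwise. Unrolling this recurrence yields $D(|V|) = O(\log_{3/2} |V|) = O(\log |V|)$, which is the claimed bound.

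I do not anticipate a genuine obstacle here; the only subtlety is making sure the base-case threshold $6(p+1)$ is exactly what forces the shrinkage factor to be bounded away from $1$ (since without the condition $|V| > 6(p+1)$, the additive $p+1$ term from the separator could keep $|V(H_i)|$ arbitrarily close to $|V|$). I would state this explicitly so the reader sees why the constant $6$ in line 3 of Algorithm \ref{alg:compute-shortcut-edges} is chosen to match $|S| \leq p+1$.
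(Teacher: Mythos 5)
Your proof is correct and follows essentially the same route as the paper: both bound the recursive subproblem size by $|V|/2 + (p+1) \leq 2|V|/3$ using the base-case threshold $6(p+1)$, and unroll the resulting recurrence to get depth $O(\log_{3/2}|V|)$. Your explicit remark on why the constant $6$ is what keeps the shrinkage factor bounded away from $1$ is a nice touch, but the substance matches the paper's induction.
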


In the previous lemma, we demonstrated that the depth of the Algorithm \ref{alg:compute-shortcut-edges} is $O(\log \vert V \vert)$. Since the size of the starting set $V_0$ increases by at most $O(p)$ at each depth, there is an upper bound for the size of the starting set in all recursive steps. The following lemma presents this condition in a formal format.

\begin{lem}[Upper bound on starting set size]
\label{lemma:SizeOfStartingSetsAreNotBig}
When running the $\textproc{ComputeEdges}(G, w, T, V_0)$ function on a graph $G = (V, E)$ with tree-width $p$ and a set of starting vertices $V_0 \subseteq V$, the size of the starting vertex set in all recursive steps of the function is bounded by $O(\vert V_0 \vert + p \cdot \log \vert V \vert)$.
\end{lem}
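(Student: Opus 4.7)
The plan is to bound the starting-set size by induction on the depth $d$ of the recursion tree of Algorithm~\ref{alg:compute-shortcut-edges}, and then invoke Lemma~\ref{lemma:DepthOfAlgorithmisNotBig} to convert the depth bound into the claimed logarithmic growth.

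First I would prove the following strengthened statement by induction on $d$: every recursive call occurring at depth $d$ has a starting set of size at most $|V_0| + d(p+1)$. The base case $d = 0$ corresponds to the initial call $\textproc{ComputeEdges}(G, w, T, V_0)$, whose starting set is exactly $V_0$, so the bound holds trivially. For the inductive step, consider a depth-$d$ call with starting set $V_0^{(d)}$ satisfying $|V_0^{(d)}| \leq |V_0| + d(p+1)$. Inside this call, the algorithm selects a separator $S$ with $|S| \leq p+1$ (by Proposition~\ref{prop:tree-width-implies-separator}), sets $V_0^\prime \gets V_0^{(d)} \cup S$, and for each component $H_i$ recurses with a starting set that is a subset of $V_0^\prime$. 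Consequently the child's starting set has size at most $|V_0^{(d)}| + |S| \leq |V_0| + (d+1)(p+1)$, completing the induction.

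To conclude, I would combine the inductive bound with Lemma~\ref{lemma:DepthOfAlgorithmisNotBig}, which says that the recursion depth is $O(\log |V|)$. Substituting $d = O(\log |V|)$ into the bound yields that every recursive call's starting set has size at most $|V_0| + O(\log |V|) \cdot (p+1) = O(|V_0| + p \cdot \log |V|)$, matching the claim of the lemma.

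I do not anticipate any serious obstacle: the statement is essentially a bookkeeping argument. The only point requiring care is verifying, by direct inspection of Algorithm~\ref{alg:compute-shortcut-edges}, that the starting set can grow only through the single line $V_0^\prime \gets V_0 \cup S$, so that each additional recursion level contributes at most $|S| \leq p+1$ new vertices to whatever is passed down, and no other mechanism enlarges it.
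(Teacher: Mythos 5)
Your proposal is correct and follows essentially the same argument as the paper: the starting set grows by at most $|S| \leq p+1$ per recursion level (since each child receives a subset of $V_0 \cup S$), and combining this with the $O(\log |V|)$ depth bound from Lemma~\ref{lemma:DepthOfAlgorithmisNotBig} gives the claim. Your version merely makes the induction on depth explicit, which the paper leaves informal.
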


Now, it's time to find an upper bound for the sensitivity of the function.

\begin{lem}[Sensitivity of algorithm]
\label{lemma:sensitiviy-of-algorithm-is-low}
Let $G = (V, E)$ be a graph of tree-width $p$ and let $T$ be its corresponding tree-decomposition. Let $V_0 \subseteq V$ be an arbitrary set of starting vertices. If $f(w) = \textproc{ConstructGraph}(G, w, T, V_0)$, then, the $\ell_1$ sensitivity of $f$ is $O\left((|V_0| + p) \cdot p \cdot \log^2 |V| + p^2\right)$.
\end{lem}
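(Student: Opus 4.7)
The plan is to bound $\|f(w_1) - f(w_2)\|_1$ for neighbouring weight functions $w_1, w_2$ with $\|w_1 - w_2\|_1 \le 1$ by summing per-edge sensitivities in the intermediate graph $G'$. Each edge of $G'$ carries either an original weight $w(e)$ or the pointwise minimum of several shortest-distance values $d_{G^{(c)}, w}(v, u)$ computed in recursive calls $c$ of \textproc{ComputeEdges}. Since each shortest-distance function is $1$-Lipschitz in $\|\cdot\|_1$ and the minimum of $1$-Lipschitz functions is $1$-Lipschitz, every edge of $G'$ has sensitivity at most $1$.

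Next I would enumerate the edges in terms of the recursion. Every non-base call $c$ inserts exactly $|V_0'^{(c)}| \cdot |S^{(c)}|$ shortcuts, while every base case inserts at most $|V^{(c)}|^2 \le (6(p+1))^2 = O(p^2)$ edges; the latter is what the additive $+\,p^2$ in the claimed bound captures. To sum $|V_0'^{(c)}| \cdot |S^{(c)}|$ across all non-base calls I would use three ingredients from earlier in the paper: Lemma~\ref{lemma:DepthOfAlgorithmisNotBig} bounds the recursion depth by $D = O(\log|V|)$; Lemma~\ref{lemma:SizeOfStartingSetsAreNotBig} gives $|V_0'^{(c)}| = O(|V_0| + p \log|V|)$ at every call; and the identity $V_i = V(C_i) \cap V_0'^{(\text{parent})}$, together with the pairwise disjointness of the components $C_i$ of $G \setminus S$, yields by induction on depth that $\sum_{c \text{ at depth } d} |V_0^{(c)}| \le |V_0|$ for every $d$.

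Combining these: writing $|V_0'^{(c)}| \cdot |S^{(c)}| \le (|V_0^{(c)}| + |S^{(c)}|)(p+1)$ and using $|S^{(c)}| \le p+1$ together with the per-depth disjointness bound, one depth level contributes at most $(p+1)\bigl(|V_0| + O(p\log|V|)\bigr)$ to the running total; summing across the $O(\log|V|)$ depths gives $O\bigl((|V_0|+p) \cdot p \cdot \log^2|V|\bigr)$. Adding the base-case $O(p^2)$ term produces the claimed bound.

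The main technical obstacle is that a single input edge may have both its endpoints lying in a separator, in which case it appears in every child call of that node, so a purely per-call accounting would pick up an extra branching factor at each such level. I would handle this via a load-style observation: the change in a shortcut's weight due to a change of a single input edge $e^{\ast}$ is nonzero only when $e^{\ast}$ lies on that shortcut's chosen shortest path in the relevant call's graph. Expressing $\|f(w_1)-f(w_2)\|_1 \le \sum_e |w_1(e)-w_2(e)| \cdot \mathrm{Load}(e)$ and controlling the worst-case $\mathrm{Load}(e)$ by the per-depth disjointness bound prevents the branching from contributing an additional super-logarithmic factor, closing the estimate.
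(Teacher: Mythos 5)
Your opening observations match the paper's: each shortcut weight is a shortest distance in some recursive call's graph (or a minimum of such values and an original weight), hence $1$-Lipschitz in the $\ell_1$ norm of $w$, and a non-base call emits at most $|V_0^\prime|\cdot|S|$ shortcuts (for that call's starting set $V_0^\prime$ and separator $S$) while a base case emits $O(p^2)$. The gap is in your middle accounting, where you bound the sensitivity by the \emph{total} number of shortcuts over all calls and claim each depth level contributes only $(p+1)\bigl(|V_0|+O(p\log|V|)\bigr)$. That is false: the number of calls at a given depth can be $\Theta(|V|)$ (one per connected component of $G\setminus S$, e.g.\ a star), so a single level can emit $\Omega(|V|)$ shortcuts, and the base cases alone emit $\Omega(|V|)$ shortcuts in total since their vertex sets cover $V$. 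Hence ``total count times per-shortcut sensitivity $1$'' can only give a bound of order $|V|\cdot p$, not the lemma. Your auxiliary claim that $\sum_{c\text{ at depth }d}|V_0^{(c)}|\le|V_0|$ is also unsound under the semantics the correctness proofs require: for Lemmas~\ref{lemma:FewToAllShortcuts-Length-on-v0S} and~\ref{lemma:Gprime-and-G-APSD-is-same} to go through, each child must inherit the parent's separator into its starting set, so the per-depth sum grows by up to $p+1$ per child rather than being conserved.

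What closes the proof is exactly the observation you relegate to the last paragraph as a patch for a corner case. The subgraphs $H_i=(V(C_i)\cup S,\,E(C_i)\cup E[C_i,S])$ built on line~\ref{alg:line:def-ui} of Algorithm~\ref{alg:compute-shortcut-edges} exclude edges with both endpoints in $S$, so the scenario you worry about (an intra-separator edge replicated into every child) never occurs, and the subgraphs passed to the children of any one call are pairwise edge-disjoint; consequently the graphs appearing at any fixed recursion depth are pairwise edge-disjoint. A perturbation of a single input edge $e^\ast$ therefore reaches at most one call per depth, and within that call it moves at most $|V_0^\prime|\cdot|S|=O\bigl((|V_0|+p\log|V|)\cdot p\bigr)$ shortcut values (or $O(p^2)$ if that call is a base case), each by at most $|w_1(e^\ast)-w_2(e^\ast)|$. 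Summing over the $O(\log|V|)$ depths of Lemma~\ref{lemma:DepthOfAlgorithmisNotBig} and then over the edges of a neighboring pair yields $O\bigl((|V_0|+p)\cdot p\cdot\log^2|V|+p^2\bigr)$. This per-edge load argument, driven by per-depth edge-disjointness, is precisely the paper's proof; in your write-up it needs to be promoted from a side remark to the main estimate, replacing the total-count computation.
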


By Lemma \ref{lemma:sensitiviy-of-algorithm-is-low}, we know that to guarantee $\varepsilon$-DP, the magnitude of each noise can be bounded by $O(p^2 \log^2 \vert V \vert/ \varepsilon)$. The next two lemmas state that we can also involve only a small number of shortcuts to compute the shortest distances. The following lemma states that we can compute the shortest distance between each pair $v \in V_0 \cup S$ and $u \in V$ using only $O(\log \vert V \vert)$-hop paths in the intermediate graph.

\begin{lem}[Distance preservation of intermediate graph: Some pair case]
\label{lemma:FewToAllShortcuts-Length-on-v0S}
Let $G = (V, E)$ be a weighted graph of tree-width $p$, $T$ be its corresponding tree-decomposition and $w: E \to \R^+$ be the weight function. Let $V_0 \subseteq V$ be an arbitrary subset of vertices. Let $(G^\prime, w^\prime)$ be the output of $\textproc{ConstructGraph}(G, w, T, V_0)$ and $S$ be the bag selected on first call of \textproc{ComputeEdges} on line \ref{alg:line:use-s} in Algorithm \ref{alg:compute-shortcut-edges}.
Then, for all $v \in V_0 \cup S$ and $u \in V$, we have $d_{G, w}(v, u) = d_{G^\prime, w^\prime}^{l}(v, u)$ where $l=\max(2, \log_{1.5} \vert V \vert)$.
\end{lem}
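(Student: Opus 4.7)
I will prove the lemma by strong induction on $|V|$. The inequality $d^l_{G', w'}(v, u) \ge d_{G, w}(v, u)$ is routine: every shortcut $(v', u', x)$ inserted by Algorithm~\ref{alg:compute-shortcut-edges} satisfies $x = d_{H, w|_H}(v', u')$ for some subgraph $H$ of $G$, so $x \ge d_{G, w}(v', u')$, and expanding each shortcut on an $l$-hop $v$-$u$ path of $G'$ into the corresponding walk in $G$ yields a $v$-$u$ walk in $G$ of the same total weight. The main work is the reverse inequality, which I establish inductively by exhibiting an explicit $\le l$-hop path in $G'$ of weight at most $d_{G, w}(v, u)$.

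The base case $|V| \le 6(p+1)$ is immediate: the algorithm inserts the direct edge $(v, u, d_{G, w}(v, u))$ for every pair. In the inductive step $|V| > 6(p+1)$, the crucial size bound is $|V(H_i)| \le |V|/2 + (p+1) \le \tfrac{2}{3}|V|$, which forces the recursion budget $l_i := \max(2, \log_{1.5}|V(H_i)|)$ to satisfy $l_i \le l - 1$. I fix $v \in V_0 \cup S$ and $u \in V$ and branch on where $u$ lies. If $u \in S$, the outer call itself inserts $(v, u, d_{G, w}(v, u))$, so a single hop is enough. If $u \in C_i$ for some $i$, I take a shortest $v$-$u$ path $P$ in $G$ and let $s$ be its last vertex in $S$. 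When no such $s$ exists, $P$ stays in $C_i$, $v \in V(C_i) \cap V_0 \subseteq V_i$, and $d_{G, w}(v, u) = d_{H_i, w_i}(v, u)$; applying the IH to $\textproc{ConstructGraph}(H_i, w_i, T_i, V_i)$ supplies a $\le l_i$-hop path in $H_i'$ of this weight, and this path lives in $G'$ with weight no larger. When $s$ does exist, $P$ decomposes as $P_1 \cdot P_2$ with $w(P_1) = d_{G, w}(v, s)$ and $w(P_2) = d_{H_i, w_i}(s, u)$ (the latter by a standard exchange argument: any shorter $s$-$u$ path in $H_i$ would yield a shorter $v$-$u$ walk in $G$, contradicting minimality of $P$). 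The outer call inserts the shortcut $(v, s, d_{G, w}(v, s))$ because $v \in V_0 \cup S$ and $s \in S$, giving one hop; the IH on the recursive call on $H_i$ then delivers a $\le l_i$-hop $s$-$u$ path of weight $d_{H_i, w_i}(s, u)$, and concatenation yields a $\le 1 + l_i \le l$-hop certificate of the desired total weight.

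\textbf{Main obstacle.} The delicate step is the $s$-exists case: the IH applied to $(H_i, w_i, T_i, V_i)$ furnishes short paths starting only from the ``inner $V_0 \cup$ inner separator'', namely $V_i \cup S_i$, while $s$ lies in the outer separator $S$ and $V_i = V(C_i) \cap V_0'$ loses $S$ because $V(C_i) \cap S = \emptyset$. I plan to close this gap via Proposition~\ref{prop:tree-decomposition-reduction}: the outer bag $S$ persists as a bag of the reduced decomposition $T_i$, and I will argue that the first-call separator chosen inside $H_i$ can be taken so that $s \in S \subseteq S_i$, so the IH applies to $s$. If this direct identification is blocked (for instance when $S$ alone fails the partition-size requirement of Proposition~\ref{prop:tree-width-implies-separator} inside $H_i$), the fallback is to strengthen the inductive hypothesis so that the set of ``roots'' explicitly accumulates every separator encountered along the recursion path, together with $V_0$. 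Getting this bookkeeping right is the main technical hurdle; the rest is the standard Sealfon-style shortcut-decomposition argument adapted to low tree-width graphs.
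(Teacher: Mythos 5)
Your proposal follows essentially the same route as the paper's proof: the same easy lower bound (every shortcut weight is the weight of a walk in $G$), the same induction on $|V|$ with base case $|V| \le 6(p+1)$, the same case split on whether $u \in S$, whether the shortest path $P$ stays inside $H_i$, or whether it last meets $S$ at some vertex $s$, and the same hop accounting via $|V(H_i)| \le |V|/2 + (p+1) \le \tfrac{2}{3}|V|$. The one step you leave open is resolved by your fallback option (b), not option (a): the construction is meant to propagate $V_0' = V_0 \cup S$ downward so that the starting set of each child call contains the parent separator, hence $s \in S \subseteq V_i$ and the induction hypothesis applies to $s$ directly as a member of the child's starting set (this is also what Lemma \ref{lemma:SizeOfStartingSetsAreNotBig} assumes when it says the starting set grows by at most $p+1$ per level, and what the paper's own proof implicitly uses when it invokes the hypothesis at $v_j \in S$). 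Option (a) is a dead end --- the separator chosen inside $H_i$ by Proposition \ref{prop:tree-width-implies-separator} need not contain $S$. Be aware that the literal text of Algorithm \ref{alg:compute-shortcut-edges}, $V_i \gets V(C_i) \cap V_0'$, actually drops $S$ since $V(C_i) \cap S = \emptyset$; it should be read as $V_i \gets V(H_i) \cap V_0' = (V(C_i) \cap V_0) \cup S$, and with that reading your argument closes and coincides with the paper's.
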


Now, it's time to improve upon the result of the last lemma. In the next lemma, we will prove that the result is also correct for all pairs shortest distances.

\begin{lem}[Distance preservation of intermediate graph: All pair case]
\label{lemma:Gprime-and-G-APSD-is-same}
Let $G = (V, E)$ be a weighted graph of tree-width $p$ and the weight function $w: E \to \R^+$. Let $T$ be a tree-decomposition of $G$, and let $(G^\prime, w^\prime) = \textproc{ConstructGraph}(G, w, T, \emptyset)$. Let $S$ be the bag selected on the first call of \textproc{ComputeEdges} on line \ref{alg:line:use-s} in Algorithm \ref{alg:compute-shortcut-edges}.
Then, for all $v$ and $u \in V$, we have $d_{G, w}(v, u) = d_{G^\prime, w^\prime}^{l}(v, u)$ where $l = 2\cdot \max(2, \log_{1.5} \vert V \vert)$.
\end{lem}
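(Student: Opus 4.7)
\medskip

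\noindent\textbf{Proof plan.} I would argue by strong induction on $|V|$, handling the two inequalities $d_{G',w'}^{l}(v,u) \le d_{G,w}(v,u)$ and $d_{G',w'}^{l}(v,u) \ge d_{G,w}(v,u)$ separately. The base case is $|V|\le 6(p+1)$: Algorithm \ref{alg:compute-shortcut-edges} inserts $(v,u,d_{G,w}(v,u))$ for every pair, so \textproc{ConstructGraph} creates a direct edge of weight $d_{G,w}(v,u)$ between $v$ and $u$, giving a $1$-hop witness which is well within the budget $l\ge 4$. For the inductive step, fix $v,u\in V$ and let $S$ be the separator chosen on line \ref{alg:line:use-s} of the top-level call (so $V_0'=S$ since $V_0=\emptyset$), and let $C_1,\dots,C_\ell$ be the components of $G\setminus S$ with associated subgraphs $H_i=(V(C_i)\cup S,\,E(C_i)\cup E[C_i,S])$.

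For the upper bound I would split on whether some shortest $v$-$u$ path $P$ in $G$ meets $S$. If $P$ contains some $s\in S$, then $d_{G,w}(v,u)=d_{G,w}(v,s)+d_{G,w}(s,u)$, and because $s\in V_0\cup S$ at the top level, Lemma~\ref{lemma:FewToAllShortcuts-Length-on-v0S} gives $\max(2,\log_{1.5}|V|)$-hop witnesses in $G'$ for both halves; concatenating yields an $l$-hop $v$-$u$ walk in $G'$ of total weight $d_{G,w}(v,u)$. Otherwise $P$ avoids $S$, so it lies entirely inside some component $C_i$ and hence is a path in $H_i$, giving $d_{H_i,w_i}(v,u)=d_{G,w}(v,u)$. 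The recursive call in Algorithm \ref{alg:compute-shortcut-edges} runs with $V_i=V(C_i)\cap V_0'=V(C_i)\cap S=\emptyset$; this exactly matches $\textproc{ConstructGraph}(H_i,w_i,T_i,\emptyset)$, so I can invoke the inductive hypothesis on $H_i$ (whose vertex count is at most $|V|/2+p+1<|V|$ in the non-base regime) to produce an $l'$-hop witness in the intermediate graph of that sub-call, where $l'=2\max(2,\log_{1.5}|V(H_i)|)\le l$. All shortcut edges produced in the recursion are inserted into the final $G'$, so the witness transports over.

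For the lower bound, I would observe that every shortcut $(v,u,x)$ inserted anywhere in the recursion satisfies $x=d_{H,w_H}(v,u)$ for some subgraph $H\subseteq G$ induced by the recursion, hence $x\ge d_{G,w}(v,u)$. By the triangle inequality in $G$, the weight of any $v$-$u$ walk in $G'$ (in particular any $l$-hop path) is then at least $d_{G,w}(v,u)$. Combined with the upper bound this yields the claimed equality.

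The main obstacle I anticipate is the bookkeeping for the induction, specifically verifying that (i) the $V_i$ passed to every recursive call in the top-level \textproc{ComputeEdges} is genuinely empty so that the inductive hypothesis applies as stated, and (ii) the hop budget $2\max(2,\log_{1.5}|V|)$ is not exceeded when we switch between the two cases, especially in the transition from the base-case regime to the recursive regime. Both reduce to arithmetic on $|V(H_i)|\le |V|/2+p+1$ together with the bound $|V|>6(p+1)$ that triggers the recursive branch, so the verification is routine but needs to be stated carefully.
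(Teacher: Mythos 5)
Your plan follows essentially the same route as the paper's proof: induction on $|V|$, a case split on whether the shortest $v$-$u$ path meets the top-level separator $S$, an appeal to Lemma~\ref{lemma:FewToAllShortcuts-Length-on-v0S} with concatenation of the two halves when it does, the induction hypothesis on the component subproblem (with starting set $\emptyset$) when it does not, and the observation that every shortcut weight is the weight of a path in $G$ for the lower bound. The additional bookkeeping you flag (checking $V_i=\emptyset$ and $|V(H_i)|\le |V|/2+p+1<|V|$) is handled implicitly in the paper and checks out, so the proposal is correct.
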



\begin{algorithm}[t!]
\caption{Differentially Private All-Pairs Shortest Path Distances for Low Tree-Width Graphs}
\label{alg:apsd-low-tree-width-main}
\begin{algorithmic}[1]
    \Require{Graph $G$ with weight function $w$, tree decomposition $T$ of $G$ with maximum bag size $p+1$ and a constant number $c$ and privacy parameter $\varepsilon$}
    \Ensure{An estimate of all-pairs shortest distances with low error}
    \State \Comment{First stage: construct the shortcut graph}
    \State $(G^\prime, w^\prime) \gets \Call{ConstructGraph}{G, w, T, \emptyset}$
    
    \State \Comment{Second stage: add Laplace noise to the edge weights to provide differential privacy}
    \ForAll{$e \in E(G^\prime)$}
        \State $X_e \sim \mathrm{Lap}(c \cdot (p+1)^2 \log(\vert V \vert)/\varepsilon)$
        \State $w^\prime(e) \gets w^\prime(e) + X_e$
    \EndFor
    
    \State \Comment{Third stage: use post-processing to return all-pairs shortest path distances}
    \State \Return All-pairs shortest at most $c \cdot \log\vertt{V}$-hop distances for $G^\prime$ with weight function $w^\prime$
\end{algorithmic}
\end{algorithm}

\subsubsection*{Proof of Theorem \ref{theorem:main}}
Let $\mathcal{M}(G, T, p, w, c, \varepsilon)$ be the following mechanism. Run Algorithm \ref{alg:apsd-low-tree-width-main}.
Lemma \ref{lemma:sensitiviy-of-algorithm-is-low} asserts that the $\ell_1$ sensitivity of the algorithm after the first stage is $O\left((0 + p) \cdot p \cdot \log^2 |V| + p^2\right)$. Then, there exists some $c_1$ such that the $\ell_1$ sensitivity after the first stage is at most $c_1 \cdot \left(p^2 \cdot \log^2(|V|)\right)$. This allows us to use the Laplace mechanism. By the Laplace mechanism \ref{Proposition:laplace-is-e-private}, after the second stage, the algorithm is $\varepsilon$-DP for all $c \geq c_1$. So, by post-processing \ref{Post-Processing-proposition}, $\mathcal{M}(G, T, p, w, c, \varepsilon)$ is $\varepsilon$-DP for all $c \geq c_1$.

For all $\gamma \in (0, 1)$, from the Laplace distribution, we know that for all $e \in E(G^\prime)$, $\Pr(\vertt{X_e} \geq \log(\vertt{E(G^\prime)}/\gamma) \cdot c \cdot p^2 \cdot \log^2(|V|)/\varepsilon) = \gamma/\vertt{E(G^\prime)}$, and we also know that $\vertt{E(G^\prime)} \leq \vert V \vert^2$. So by the union bound, with probability $1-\gamma$, no $X_e$ has magnitude greater than $2c \cdot \log(\vertt{V}/\gamma) \cdot p^2 \cdot \log^2(|V|)/\varepsilon$. In the last stage, we return at most $c \log \vert V \vert$-hop shortest distances. These paths contain at most $c \log \vert V \vert$ edges and so with probability $1-\gamma$, the magnitude of error in each released distance is at most $2c^2 \cdot log(\vertt{V}/\gamma) p^2 log^3(|V|)/\varepsilon$ between the returned value and the result if all $X_e$s are equal to zero. From lemma \ref{lemma:Gprime-and-G-APSD-is-same}, for $c > 2/\log(1.5)$, the result when $X_e$s are equal to zero is APSD for $G$ with weight $w$. Thus, for $c > max(c_1, 2/\log(1.5))$, $\mathcal{M}(G, T, p, w, c, \varepsilon)$ is an $\varepsilon$-DP mechanism and with probability $1-\gamma$, has an error less than $2c^2 log(\vertt{V}/\gamma) p^2 log^3(|V|)/\varepsilon$, which is $O(log(1/\gamma) p^2 log^4(|V|)/\varepsilon)$.

Note that in the proof of Theorem \ref{theorem:main}, we took the mechanism $\mathcal{M}(G, T, p, w, c, \varepsilon)$ as Algorithm \ref{alg:apsd-low-tree-width-main} on the input graph $G$, weight function $w$, tree decomposition $T$, constant $c$, and privacy parameter $\varepsilon$, and returns its result.

\subsection{Time Complexity}
\label{section:complexity}
In this sub-section, we analyze the time complexity of Algorithm \ref{alg:apsd-low-tree-width-main} and show that it has a polynomial-time implementation. Lemma \ref{lemma:construct-intermediate-graph-compleixy} shows that the first stage of the algorithm is polynomial-time and Theorem \ref{theorem:complexity} shows that our main algorithm is a polynomial-time algorithm.

\begin{lem}
\label{lemma:construct-intermediate-graph-compleixy}
The time complexity of Algorithm \ref{alg:cunstruct-shortcut-graph} on an input graph $G=(V, E)$ is $O(\vert V \vert^3 \log \vert V \vert)$.
\end{lem}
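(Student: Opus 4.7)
I would reduce the analysis of Algorithm \ref{alg:cunstruct-shortcut-graph} to its single call to \textproc{ComputeEdges} together with the subsequent loop that merges each triple $(v, u, x) \in R$ into the output graph $G^\prime$; since the merging loop is $O(|R|)$ and $|R|$ is bounded by the total number of triples produced inside \textproc{ComputeEdges}, it will be dominated by the cost of \textproc{ComputeEdges} itself. For a single invocation of \textproc{ComputeEdges} on a subgraph $H$ with $n_H := |V(H)|$ vertices I would show the per-call work is $O(n_H^3)$: in the base case $n_H \leq 6(p+1)$ one runs Floyd--Warshall at cost $O(n_H^3)$; in the recursive case, extracting the separator $S$ via Proposition \ref{prop:tree-width-implies-separator} takes $O(n_H^2)$, computing $d_{H, w}(v, u)$ for all $(v, u) \in V_0^\prime \times S$ via $|S| \leq p+1$ single-source Dijkstra runs rooted in $S$ costs $O(p \cdot n_H^2)$, and building the components $C_i$, the induced subgraphs $H_i$, and the reduced tree decompositions $T_i$ (Proposition \ref{prop:tree-decomposition-reduction}) costs $O(n_H^2)$. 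The recursive-case guard $n_H > 6(p+1)$ then gives $p < n_H$, so the per-call cost is indeed $O(n_H^3)$.

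To combine this with the recursion structure, I would use Lemma \ref{lemma:DepthOfAlgorithmisNotBig} (depth $O(\log |V|)$) and argue that the total per-call work at any single level of the recursion tree is $O(|V|^3)$. Two geometric facts will be key: (i) Proposition \ref{prop:tree-width-implies-separator} guarantees $|C_i| \leq n_H/2$ and the recursive-case guard gives $|S| \leq (p+1) \leq n_H/6$, so every child satisfies $|H_i| \leq (2/3)\cdot n_H$ and the maximum subproblem size at level $d$ is at most $|V| \cdot (2/3)^d$; (ii) at any level, the non-separator contributions across all subproblems come from a partition of a subset of $V$, so their sum is at most $|V|$, and the separator duplication only adds a controlled overhead per child. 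Bounding $\sum_i |H_i|^3 \leq (\max_i |H_i|)^2 \cdot \sum_i |H_i|$ with these two facts then yields the per-level bound $O(|V|^3)$, and summing over $O(\log |V|)$ levels gives the claimed $O(|V|^3 \log |V|)$.

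The main obstacle I anticipate is making the per-level bound on $\sum_i |H_i|^3$ airtight, since duplicating the separator $S$ into every one of the $l$ children can inflate $\sum_i |H_i|$ by as much as $(l-1)(p+1)$ per parent call, so naively this could grow geometrically with the recursion depth. I plan to absorb this by using the recursive-case guard $p+1 \leq n_H/6$ to bound the duplication cost against the size of the current subgraph, amortizing it against the geometric decrease of $\max_i |H_i|$ so that $\sum_i |H_i|^3$ stays within a constant factor of $|V|^3$ at each level. Once this is done, the merging loop in \textproc{ConstructGraph} falls within the same budget because each recursive call inserts at most $O(n_H^2)$ triples into $R$, so $|R| = O(|V|^3 \log |V|)$, completing the proof.
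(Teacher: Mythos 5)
Your overall skeleton is the same as the paper's---recursion depth $O(\log |V|)$ from Lemma \ref{lemma:DepthOfAlgorithmisNotBig}, times a per-level work bound, plus the merging loop---but your per-level accounting is genuinely different and in one respect more careful. The paper charges each call of \textproc{ComputeEdges} only $O(n^2)$ and multiplies by an asserted bound of at most $|V|$ instances per level; it does not separately account for the cost of computing $d_{G,w}(v,u)$ for all $v \in V_0'$, $u \in S$, which, as you correctly note, requires about $|S| \le p+1$ single-source shortest-path computations, i.e.\ $O(p \cdot n_H^2)$ per call. Your $O(n_H^3)$ per-call bound absorbs this via the guard $n_H > 6(p+1)$, which is more honest, but it forces you into the finer aggregation $\sum_i |H_i|^3 \le \bigl(\max_i |H_i|\bigr)^2 \cdot \sum_i |H_i|$ in place of the paper's count-times-max argument.

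That aggregation is where your one real gap sits, and you have half-identified it yourself. Your fact (ii)---that the non-separator parts of the subproblems at a given level partition a subset of $V$ and hence sum to at most $|V|$---is only true at the first level. From the second level on, a vertex of the separator $S$ is copied into every child $H_i$, and if it is not chosen in a child's own separator it reappears in the component part of a subproblem under each of several distinct parents; so the component parts at a fixed depth are not pairwise disjoint, and $\sum_i |H_i|$ can exceed $|V|$ by a factor that compounds across levels. Your proposed repair does work: the guard gives every recursing child a component part of size at least $5(p+1)$, so the per-level inflation of $\sum_i |H_i|$ is bounded by a constant factor such as $6/5$, and since $(2/3)^2 \cdot (6/5) < 1$ the decay of $\bigl(\max_i |H_i|\bigr)^2$ dominates the growth, keeping each level at $O(|V|^3)$. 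But as written this is a plan, not a proof---you need to actually run that recurrence on $\sum_i |H_i|$ level by level (and account separately for the possibly numerous base-case children). For what it is worth, the paper's own proof leaps over the same difficulty: its claim of at most $|V|$ instances per level is asserted rather than derived, so carrying out your amortization would yield a tighter and more complete analysis than the one in the paper.
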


\begin{onlyarxiv}
The proof of Lemma \ref{lemma:construct-intermediate-graph-compleixy} is deferred to the appendix.
\end{onlyarxiv}
\begin{onlyisncc}
The proof of lemma can be found in appendix of \cite{arxivversion}.
\end{onlyisncc}

\begin{theo}
\label{theorem:complexity}
The time complexity of Algorithm \ref{alg:apsd-low-tree-width-main} on input graph $G=(V, E)$ is $O(\vert V \vert^3 \log \vert V \vert)$.
\end{theo}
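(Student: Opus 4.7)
The plan is to bound the running time of Algorithm \ref{alg:apsd-low-tree-width-main} stage-by-stage and then take the dominant term. The algorithm has exactly three stages, so I will account for each and show that the total is $O(|V|^3 \log |V|)$.

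First, the construction of the intermediate graph $(G', w')$ in stage 1 is a direct call to \textproc{ConstructGraph}, which by Lemma \ref{lemma:construct-intermediate-graph-compleixy} runs in $O(|V|^3 \log |V|)$ time. This already matches the target bound, so the remaining stages only need to be shown not to exceed it. For stage 2, the algorithm simply iterates over the edges of $G'$ and perturbs each weight by an independent Laplace sample; since $|E(G')| \le \binom{|V|}{2} = O(|V|^2)$ and each noise can be drawn and added in $O(1)$ expected time, stage 2 costs $O(|V|^2)$, which is absorbed.

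The main step to justify is stage 3, which computes, for every pair $v,u \in V$, the $(c \log |V|)$-hop shortest distance in $(G', w')$. The plan is to use a standard bounded-hop Bellman--Ford relaxation from each source: for a fixed source $s$, maintain an array $D_s[\cdot]$ initialized to $D_s[s]=0$ and $+\infty$ elsewhere, and repeat $k := c \log |V|$ rounds where each round relaxes every edge of $G'$. A straightforward induction on the number of rounds shows that after $i$ rounds $D_s[v] = d^{i}_{G', w'}(s,v)$, so after $k$ rounds we have exactly the $k$-hop shortest distances from $s$. Each round examines $O(|V|^2)$ edges, so one source costs $O(k \cdot |V|^2) = O(|V|^2 \log |V|)$, and iterating over all $|V|$ sources gives $O(|V|^3 \log |V|)$ for stage 3.

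Summing the three stages yields $O(|V|^3 \log |V|) + O(|V|^2) + O(|V|^3 \log |V|) = O(|V|^3 \log |V|)$, which is the claimed bound. The only non-trivial step is the stage 3 analysis, and the obstacle there is purely notational: one must be careful that the bounded-hop variant of Bellman--Ford really computes $d^{k}_{G', w'}$ as defined in the preliminaries (minimum weight over paths with at most $k$ edges), which follows from the standard inductive invariant and the fact that each round can only extend the best current path by one additional edge. No macro or environment beyond what is already defined in the paper is needed for this argument.
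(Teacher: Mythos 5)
Your proposal is correct and matches the paper's own argument: both bound stage 1 by Lemma \ref{lemma:construct-intermediate-graph-compleixy}, dismiss stage 2 as $O(|V|^2)$, and handle stage 3 with a hop-indexed dynamic program (the paper writes it as $d(v,u,k)=\min_z d(v,z,k-1)+w'(z,u)$, which is exactly your per-source synchronous Bellman--Ford organized by pairs), yielding $O(|V|^3\log|V|)$ overall. Your added caution that the relaxation must use the previous round's values so that it computes $d^{k}_{G',w'}$ exactly is a worthwhile clarification but does not change the approach.
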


\begin{proof}
Lemma \ref{lemma:construct-intermediate-graph-compleixy} shows that the first stage of the algorithm has $O(\vert V \vert^3 \log \vert V \vert)$ time complexity. In the second stage, we run a for loop on every edge of the intermediate graph, so its time complexity is $O(\vert V \vert^2)$. The third stage of the algorithm can be implemented using dynamic programming. Let $d(v, u, k)$ be the $k$-hop shortest distance on $G^\prime$ between $v$ and $u$. We can fill it if $k = 0$ or $k = 1$ in $O(\vert V \vert^2)$ time. Then, for every $v, u \in V, 1 \leq k \leq c \cdot \log \vert V \vert$, $d(v, u, k)$ is equal to the minimum of $d(v, z, k-1) + w^\prime(z, u)$ for every $z \in V$. So we can compute it in $O(\vert V \vert^3 \log \vert V \vert \cdot c)$ time. We assume $c$ is constant, so the running time of the algorithm is $O(\vert V \vert^3 \log \vert V \vert)$.
\end{proof}


    




\section{Conclusion}
We proposed a polynomial time differentially private algorithm for the problem of all pair shortest distance problem for the class of low tree-width graphs. This class of graphs contains trees as a subclass. Thus, our founding generalizes the result of Sealfon in \cite{Shortest-Paths-and-Distances-with-Differential-Privacy-Sealfon}. Despite the fact that general low tree-width graphs are fundamentally more complex than trees, we manage to achieve essentially the same additive error term as that of the work of Sealfon for trees. 

On the other hand, while the algorithm of the work of \cite{chen2023differentially} provides a differentially private algorithm for the all pair shortest distance problem for general graphs, when restrict to the class of low tree-width
graphs (i.e tree-width of order $o(n^{1/3})$), our algorithm benefits from significantly lower additive error. i.e. $O(p^2 \operatorname{polylog}(n) /\varepsilon)$ versus 
$O(n^{2/3} \operatorname{polylog}(n)/\varepsilon)$.

\begin{onlyisncc}
\section*{Acknowledgment}
This research was in part supported by a grant from IPM (No. 1402050113)

\end{onlyisncc}

\bibliographystyle{IEEEtranS}

\bibliography{IEEEexample}

\begin{onlyarxiv}

\section{Appendix}
\label{section:appendix}

\subsection{Proof of Lemma \ref{lemma:DepthOfAlgorithmisNotBig}}
Let $d(n)$ denote the maximum recursion depth of the \textproc{ComputeEdges} function on graph inputs with $n$ vertices with tree-width $p$, we have:
\[
d(n) \leq 
  \left\{ \begin{array}{ll}
    1  & n \leq 6 \cdot (p+1) \\
    d(n/2 + p + 1) + 1  & Otherwise \\
  \end{array}\right.
\]
Thus, we claim that for $C \geq \frac{1}{\log 1.5}$, $d(n) \leq C\log n$
we show the result by induction.

For $n \leq 6 (p+1)$, we have $d(n) = 1 = O(\log n)$.
Otherwise, we have:
\begin{align*}
	\operatorname{d}(n) &= 1 + d(n/2 + p + 1) \\
	&\leq 1 + C(\log (n/2+p+1)) \\
	&\leq 1 + C(\log (n / c)) & (3n / 2 > p + 1 + n/2) \\
	&\leq 1  - C\log 1.5 + C\log (n) \\
	& \leq C \log (n) & (C \geq 1/\log 1.5)
	\end{align*}
	Hence, 
	$d(n) = O(\log n)$ and by induction, the statement is proved.

\subsection{Proof of Lemma \ref{lemma:SizeOfStartingSetsAreNotBig}}
During the execution of the algorithm, the size of the starting vertex set increases by at most $p+1$ in each recursive step. According to Lemma \ref{lemma:DepthOfAlgorithmisNotBig}, the maximum recursion depth is $O(\log \vert V \vert)$. Therefore, the size of the starting vertex set is bounded by $O(\vert V_0 \vert + p \cdot \log \vert V \vert)$.

\subsection{Proof of Lemma \ref{lemma:sensitiviy-of-algorithm-is-low}}
During each recursive call of the \textproc{ConstructGraph} function (except for the base case), with a starting set of size $k$, $(k + p) \cdot p$ edges are added to the returned list. By Lemma \ref{lemma:SizeOfStartingSetsAreNotBig}, $k = O(\vert V_0 \vert + p \cdot \log \vert V \vert)$. For each edge, we add two public vertices that are calculated without using private data and a shortest distance. Since the $\ell_1$ sensitivity of each shortest distance is 1, returning all $(k+p)\cdot p$ edges has an $\ell_1$ sensitivity of $O(p \cdot (\vert V_0 \vert + p) + p^2 \cdot \log \vert V \vert)$, which is $O((\vert V_0 \vert + p) \cdot p \cdot \log \vert V \vert)$.

In each recursive call, we call some graphs for the next level which are edge-disjoint. So at each recursion depth level, the graphs in each call are edge-disjoint from one another, so the $\ell_1$ sensitivity of the result at each recursion depth remains $O((\vert V_0 \vert + p) \cdot p \cdot \log \vert V \vert)$. At the last recursion depth, the number of vertices in the graph is less than $6p^2$ and we return all-pairs shortest distances, so the $\ell_1$ sensitivity at this depth level is $O(p^2)$. On the other hand, by Lemma $\ref{lemma:DepthOfAlgorithmisNotBig}$, there are at most $O(\log \vert V \vert)$ recursion depth levels. Therefore, the $\ell_1$ sensitivity of $f$ is $O((\vert V_0 \vert + p) \cdot p \cdot \log^2 \vert V \vert + p^2)$.

\subsection{Proof of Lemma \ref{lemma:FewToAllShortcuts-Length-on-v0S}}
In algorithm \ref{alg:cunstruct-shortcut-graph}, the weights of all added edges are the weights of some path in $G$, so $d_{G, w}(v, u) \leq d_{G^\prime, w^\prime}^{l}(v, u)$ for all $v, u \in V$ and $l \in [\vert V \vert]$.
\\
We will use induction on $\vertt{V}$ to prove we have:
\begin{equation}
\label{equation:lemma-length-v0-to-all-proof}
d_{G^\prime, w^\prime}^{t}(v, u) \leq d_{G, w}(v, u)
\end{equation}
for $t=\max(2, \log_{1.5} \vert V \vert)$ and for all $v \in V_0 \cup S$ and $u \in V$. If $\vertt{V} \leq 6(p+1)$, 
then, $w^\prime(v, u) = d_w(v, u)$ which concludes the result for this case.

Now assume that the claim holds for graphs of tree-width at most $p$ and fewer vertices than $n$. Consider a weighted graph $G = (V, E)$ with weight function $w: E \to \R^+$ and 
$\vertt{V} = n$.
For any $v \in V_0 \cup S$ and $u \in V$, let P be $v = v_0, v_1, \ldots, v_t = u$ shortest path in $G$. 
If $u \in S$, we have $d_{G^\prime, w^\prime}^1(v, u) = w(P)$ which makes the result trivial. Otherwise, if $u \in G \setminus S$, let $C_1, \ldots, C_l$ be the connected components of $G \setminus S$ and let $U_1, \ldots U_t$ be the subgraphs that are constructed on line \ref{alg:line:def-ui} using the sets $C_i$. Also define the sets $V_i$, weight functions $w_i$, and tree decompositions $T_i$ following the algorithm. Thus there exists an index $i \in [l]$ such that $u \in C_i$.
Let graph $G_i$ with weight function $w_i$ be the return of $\textproc{ConstructGraph}(U_i, w_i, T_i, V_i)$.

If the path $P$ is completely contained in $U_i$, according to induction hypothesis and since $\vertt{V(U_i)} \leq n$, \ref{equation:lemma-length-v0-to-all-proof} holds for $G_i$ and weight $w_i$. Since this graph a subgraph of $G^\prime$ and all edge weights of $G^\prime$ is less than edge weights of $G_i$, we can conclude the result in this case. Otherwise, if $P$ is not in fully on $U_i$, then, since $C_i \subseteq U_i$, $P$ is not completely contained in $C_i$. Because
$\{C_1, \ldots, C_t\}$ are the connected components of 
$G \setminus S$, thus there exists an index $j \in [t]$ such that the vertex $v_j$ lies in $S$. Without loss of generality we can assume that $j$ is the largest index such that vertex $v_j$ lies in $S$. Then, the subpath $P^\prime$ consisting of vertices $v_j, v_{j+1}, \ldots, v_t$ lies entirely within $U_i$. The induction hypothesis implies that 
$d_{G_i, w_i}^{\max(2, \log_{1.5} \vert V_i \vert)}(v_j, v_t) \leq w(P^\prime)$.

Because 
$n > 6(p+1)$, we have $\vert{V_i} \leq n / 1.5$. Therefore, we have 
$\log_{1.5} \vertt{V_i} \leq \log_{1.5} \vertt{V} - 1.$

For the last step note that because all edge weights in $G^\prime$ is less than or equal to those in $G_i$ and $G_i$ is a subgraph of $G^\prime$, we have 
$d_{G^\prime, w^\prime}^{\log_{1.5}(\vert V \vert) - 1}(v_j, v_t) \leq w(P^\prime)$.
Since $v_j \in S$, there is an edge between $v_0$ and $v_j$ in $G^\prime$ with weight $d_{G, w}(v_0, v_j) = w(v_0, v_1, \ldots v_j)$. 
Thus for $l=\log_{1.5}(\vert V \vert)$, we have $d_{G^\prime, w^\prime}^{l}(v_0, v_k) \leq w(v_0, v_1, \ldots v_j) + w(v_j, v_{j+1}, \ldots v_t) = w(P)$. Because $n > 6(p+1)$, thus $\log_{1.5}(\vert V \vert) > 2$, so the claim is true and thus, the lemma is true.

\subsection{Proof of Lemma \ref{lemma:Gprime-and-G-APSD-is-same}}
We prove the lemma by induction. If $\vert V \vert \leq 6(p+1)$, the result follows trivially. Now assume that the result holds true for all graphs of tree-width at most $p$ and less than $n$ vertices. Let $G$ be a graph with $n$ vertices with weight $w: E \to \R^+$ and tree-width at most $p$ and $T$ its corresponding tree decomposition. Let $v, u \in V$ be two arbitrary vertices and let path $P = v = v_0, v_1, \ldots v_t = u$ be the shortest path between $v$ and $u$ in $G$ with weight function $w$. If there exists some $i \in \{0, 1, \ldots, t\}$ such that $v_i \in S$, then, Lemma \ref{lemma:FewToAllShortcuts-Length-on-v0S} implies that $d_{G, w}(v, v_i) = d_{G^\prime, w^\prime}^{\max(2, \log_{1.5} \vert V \vert)}(v, v_i)$ and 
$d_{G, w}(v_i, u) = d_{G^\prime, w^\prime}^{\max(2, \log_{1.5} \vert V \vert)}(v_i, u)$. Since $v_i$ is on the shortest path between $v$ and $u$, we have $d_{G, w}(v, u) = d_{G, w}(v, v_i) + d_{G, w}(v_i, u)$. We also can deduce that $d_{G^\prime, w^\prime}^{2 \cdot \max(2, \log_{1.5} \vert V \vert)}(v, u) \leq d_{G^\prime, w^\prime}^{\max(2, \log_{1.5} \vert V \vert)}(v, v_i) + d_{G^\prime, w^\prime}^{\max(2, \log_{1.5} \vert V \vert)}(v_i, u)$ which implies that $d_{G^\prime, w^\prime}^{2 \cdot \max(2, \log_{1.5} \vert V \vert)}(v, u) \leq d_{G, w}(v, u)$. In proof of Lemma \ref{lemma:FewToAllShortcuts-Length-on-v0S}, we also showed that $d_{G^\prime, w^\prime}^{2 \cdot \max(2, \log_{1.5} \vert V \vert)}(v, u) \geq d_{G, w}(v, u)$ so the result implies in this case. Otherwise assume that there is no $i$ such that $v_i \in S$. Then, path $P$ is completely contained on one of connected components of $G \setminus S$ which implies the result by induction hypothesis.

\subsection{Proof of Lemma \ref{lemma:construct-intermediate-graph-compleixy}}
The algorithm consists of two steps. At first, it runs the $\textproc{ComputeEdges}(\cdot)$ function and then constructs the graph. The second step runs a for loop on every item of the result of the first step, so if the time complexity of the first stage be $T(\vert V \vert)$, the time complexity of the second stage is $O(T(\vert V \vert) + \vert V \vert^2)$, $O(\vert V \vert^2)$ for initializing $w^\prime$ and $O(T(\vert V \vert)$ for iterating $R$. Now, we prove that the $T(\vert V \vert) = O(\vert V^3 \vert \log \vert V \vert)$. The \textproc{ComputeEdges} function is a recursive function. Lemma \ref{lemma:DepthOfAlgorithmisNotBig} shows that the depth of it is $O(\log \vert V \vert)$. The recursion call of each instance of the function is based on connected components created when removing a bag from input, thus, in each depth-level of recursion call, at most $\vert V \vert$ instances of the function can be executed simultaneously. If $\vert V \vert \leq 6(p+1)$, the running time of function is $O(\vert V \vert^2)$. Otherwise, computing $S$ and constructing $H_i$s, $V_i^\prime$s and $T_i$s also has $O(\vert V \vert^2)$ time complexity. Thus, the time complexity of each instance of \textproc{ComputeEdges} without recursion calls is $O(\vert V\vert^2)$.

Thus, the time complexity of \textproc{ComputeEdges} on input graph $G=(V, E)$ is $O(\vert V \vert^2 \cdot \vert V \vert \cdot \log \vert V \vert) = O(\vert V \vert^3 \log \vert V \vert)$ and the lemma is proved.

\end{onlyarxiv}

\end{document}